\documentclass[12pt]{article}
\usepackage{setspace}
\usepackage{amsmath,amssymb,mathrsfs}
\usepackage{color}
\usepackage{hyperref}
\usepackage{graphicx,epsfig}
\usepackage{natbib} 
\usepackage{xcolor}
\renewcommand{\emph}[1]{\textit{#1}}
\numberwithin{equation}{section}

\newcommand{\beq}{\begin{equation}}
\newcommand{\eeq}[1]{\label{#1}\end{equation}}
\newcommand{\bea}{\begin{eqnarray}}
\newcommand{\eea}[1]{\label{#1}\end{eqnarray}}

\usepackage{enumitem}
\usepackage{mathtools}
\usepackage{bbm}
\usepackage{amsthm}
\usepackage{amsmath,amssymb,amsfonts}
\usepackage{appendix}
\usepackage{diagbox}

\DeclareMathOperator{\im}{i}

\newtheorem{remark}{Remark}
\newtheorem{proposition}{Proposition}

\def\Cl3{{\mathcal{C}\ell}_{3,0}}
\def\Cl{{\mathcal{C}\ell}_{0,m}}

\def\Cl{{\mathcal{C}\ell}_{0,m}}

\newcommand{\seqnum}[1]{\href{https://oeis.org/#1}{\rm \underline{#1}}}

\begin{document}
\begin{titlepage}
\begin{center}
\vskip 4 cm

{\Large \bf Clifford-Appell formulation of a Dirac-type Kronig-Penney model in condensed matter physics}

\vskip 1 cm

{\large \bf L. Aceto$^{a,}$~\footnote{lidia.aceto@uniupo.it
}, 
\large {\bf P.A. Grassi}$^{a,b,}$~\footnote{{pietro.grassi@uniupo.it}}, and   
\large {\bf H. Malonek}$^{c,}$~\footnote{{
hrmalon@ua.pt}}}

\vskip .75 cm

{ $^{(a)}$ \it Dipartimento di Scienze e Innovazione Tecnologica (DiSIT),\\
Universit\`a del Piemonte Orientale, Viale T. Michel, 11, 15121 Alessandria, Italy}

 {$^{(b)}$
 \it INFN, Sezione di Torino and  Centro Fermi} 
 {\it Via P. Giuria, 1, 10125, Torino, Italy}

{$^{(c)}$ \it CIDMA  and Departamento de Matemática
Campus Universitário de Santiago, 3810-193, 
Aveiro, Portugal}

\end{center}

\begin{abstract}
We investigate the stationary Dirac equation associated with a generalized Kronig-Penney model in \((N+1)\)-dimensional Clifford analysis. Away from the interaction sites, the free Dirac system is reformulated by introducing suitable generalized Cauchy-Riemann operators, leading to a coupled first-order system for two Clifford-valued fields. By means of characteristic hypercomplex variables, this system is reduced to a pair of decoupled generalized Helmholtz equations. To construct explicit solutions, we introduce a bivariate Clifford-Appell polynomial basis adapted to the generalized Cauchy-Riemann operators. The Appell expansions transform the differential problem into an algebraic recurrence for the expansion coefficients, yielding explicit series representations of the solutions. The proposed framework provides a unified Clifford-analytic formulation of the free propagation problem and establishes an explicit connection between generalized Helmholtz equations, and Appell polynomial systems in hypercomplex function theory. 
\end{abstract}

\end{titlepage}

%\maketitle

%%%%%%%%%%%%%%%%%%%%%%%%%%%%%%%%%%%%%%%%%%%%%%%%%%%%%%%%%%%%%%%%%%%%%%%%%%%%%%%%%%%%%%%%%%%%%%%%%%%%%%%%%%%%%%%%%%%%%%%%%%%%%%%%%%%%

\section[1]{Introduction} \label{sec:Cliff]}

The one-dimensional quantum scattering problem of a particle with positive energy $E$ interacting with a strongly localized potential has recently been investigated in \cite{FGS-2025}. In this setting, the interaction is modeled by a finite array of Dirac delta distributions, leading to the finite Kronig-Penney model, a paradigmatic system in condensed matter physics. The dynamics are governed by the stationary Schrödinger equation
\begin{equation}
\label{SCA}
-\frac{\hbar^2}{2m}\psi''(x)
+\lambda\sum_{n=1}^{N}\delta(x-nL)\psi(x)
=
E\psi(x),
\end{equation}
where the potential consists of $N$ equidistant Dirac delta interactions located at $x=nL$, with $L$ denoting the lattice spacing. The coupling constant $\lambda$ determines the physical regime of the system: $\lambda>0$ corresponds to an array of repulsive scatterers, whereas $\lambda<0$ describes a sequence of attractive wells capable of supporting bound states.

In \cite{FGS-2025}, a rigorous analytical framework was developed to obtain a closed-form expression for the transfer matrix of this finite Kronig-Penney model. Such a representation provides a systematic characterization of the scattering and transmission properties of the system and allows a detailed investigation of its resonant structure.

A noteworthy feature revealed by this analysis is the emergence of specific non-symmetric triangular arrays of integers. The elementary examples arising in this context are listed in Table~\ref{table1}. These configurations were previously studied from a combinatorial perspective in \cite{Cacao-2023}, where they were classified in connection with corresponding entries in the \textit{On-Line Encyclopedia of Integer Sequences} (OEIS).
\begin{table}[ht]
\centering
\caption{Three integer triangular arrays showing the first four rows
($k=0,1,2,3$ and $0\leq s\leq k$) considered in \cite[Table~1]{Cacao-2023}.}
\label{table1}

\renewcommand{\arraystretch}{1.2}

\begin{tabular}{l@{\qquad\qquad}l}
\hline
\\[-1ex]

$\displaystyle
\begin{array}{rrrr}
1\\
2&1\\
3&2&1\\
4&3&2&1
\end{array}
$
&
\seqnum{A004736}
\\[3ex]

\hline
\\[-1ex]

$\displaystyle
\begin{array}{rrrr}
1\\
3&2\\
6&6&3\\
10&12&9&4
\end{array}
$
&
\seqnum{A104633}
\\[3ex]

\hline
\\[-1ex]

$\displaystyle
\begin{array}{rrrr}
1\\
4&3\\
10&12&6\\
20&30&24&10
\end{array}
$
&
\seqnum{A103252}
\\[2ex]

\hline

\end{tabular}

\end{table} 
The mathematical background of these arrays can be traced back to the hypercomplex construction introduced in \cite{Malonek-Falcao-2006}, where non-symmetric triangular arrays first appeared naturally in the study of quasi-conformal mappings through hypercomplex monogenic polynomials.
Subsequent investigations have uncovered further analytic and combinatorial properties of these structures, highlighting their connections with Clifford analysis and discrete Clifford analysis; see, for example, \cite{Bock-2014,BSL-1992,CMFT-2026,CMFT2018,Eelbode-2012,Hitzer-2017}.

The occurrence of such arrays in a quantum scattering framework suggests the existence of an underlying algebraic structure that remains hidden in the original Schrödinger formulation. The present work pursues this line of research by investigating the unexpected manifestation of these arrays in a quantum-mechanical setting. Our aim is to show that the patterns observed in the finite Kronig–Penney model originate from the Clifford-algebraic structure underlying a suitable first-order reformulation of the Schrödinger equation. In particular, by associating a Dirac-type equation with the scattering problem, we demonstrate how the dimensional parameter of the Clifford framework is encoded in the patterns emerging from the transfer-matrix formulation.

While the present paper focuses on the physical and algebraic origin of these structures, the explicit construction of the non-symmetric number triangles arising from the principal transfer matrix, together with their recursive properties and combinatorial interpretation, is presented in a separate contribution \cite{AcetoGrassiMalonek2026}.

The main result of the present approach is the identification of these arrays as genuine manifestations of the hidden Clifford-algebraic structure underlying the Dirac-type formulation, rather than as accidental numerical features of the scattering coefficients. Although Eq.~\eqref{SCA} describes a one-dimensional quantum system, its first-order reformulation naturally leads to higher-dimensional Clifford representations, whose algebraic properties are encoded by generalized hypercomplex Appell polynomials. Finally, the Dirac-type equation can be embedded into a Dirac equation by introducing auxiliary space coordinates on which the corresponding $\gamma$-matrices act.

This connection also places the present analysis within a broader context in which similar arithmetic structures arise in quantum theory. In particular, works published in 2019 and 2021 \cite{Wigner-2019,CommentWigner-2021} identified related triangular configurations, interpreted as Wigner number trapezoids, in the framework of Wigner rotation matrices and angular momentum theory. Such parallels suggest that these combinatorial structures reflect recurring algebraic patterns underlying different mathematical formulations of quantum phenomena.

The paper is organized as follows. Section~\ref{sec:2}  provides the analytical background on hypercomplex function theory. Section~\ref{sec:3} is devoted to the matrix representations, projection operators, recursive constructions of the \(\gamma\)-matrices, and their algebraic properties. Section~\ref{sec:4} establishes the connection between the finite Kronig-Penney model and the associated Dirac-type equation. Section~\ref{sec:5} presents the Clifford-based formulation of the exact solutions and the corresponding scattering properties. Section~\ref{sec:6} introduces the generalized Cauchy-Riemann framework leading to the associated Helmholtz equations. Section~\ref{sec:7} develops the Appell expansions for Clifford-valued Helmholtz equations and discusses their role in the representation of the transfer structure. Finally, Section~\ref{sec:FINE} contains concluding remarks and perspectives for future developments.

%%%%%%%%%%%%%%%%%%%%%%%%%%%%%%%%%%%%%%%%%%%%%%%%%%%%%%%%%%%%%%%%%%%%%%%%%%%%%%%%%%%%%%%%%%%%%%%%%%%%%%%%%%%%%%%%%%%%%%%%%%%%%%%%%%%%%%%%%%%%%%%%%%%%%%%%%%%%
\section[2]{Hypercomplex function theory and Appell polynomials} \label{sec:2}

To formulate the scattering problem in arbitrary spatial dimensions, the complex analytic framework must be replaced by its higher-dimensional extension. This extension is provided by Clifford analysis \cite{BrackxDelangheSommen-1982}, which is based on the real Clifford algebra $\Cl.$  The algebra  $\Cl$ is generated by an orthonormal basis $\{e_1, e_2,\dots, e_m\}$ of $\mathbb{R}^m$ satisfying the anti-commutation relations
\begin{equation*} \label{clifford-mult}
e_k e_l + e_l e_k = -2\delta_{kl}, \quad k,l=1,2,\dots,m,
\end{equation*}
where $\delta_{kl}$ denotes the Kronecker delta. The set $\{e_A:A\subseteq \{ 1,\dots,m\}\}$ with
\begin{equation*}
	e_A=e_{h_1}e_{h_2}\cdots e_{h_r}, \ 1\leq h_1< \cdots < h_r\le m,\ e_{\emptyset}=e_0=1,
\end{equation*}
forms a basis of the $2^m$-dimensional Clifford algebra $\Cl$ over $\mathbb{R}.$  In this framework, the elements of $\mathbb{R}^{m+1}$ are identified with paravectors of the Clifford algebra $\Cl.$ In particular, for
\[
x = (x_0, x_1, \dots, x_m) \in \mathbb{R}^{m+1},
\]
the corresponding paravector is defined as
\begin{equation*}
x = x_0 + \mathbf{x} = x_0 + \sum_{k=1}^m x_k e_k,
\end{equation*}
where  $x_0$ and  $\mathbf{x}$ denote the scalar and vector parts, respectively. The Clifford conjugation and the norm of a paravector are given by
\[
\bar{x} = x_0 - \mathbf{x}, \qquad |x| = \sqrt{x\bar{x}}.
\]
This norm provides the Euclidean metric structure on the space of paravectors.  A $\Cl$-valued function defined on an open set  $\Omega\subset \mathbb{R}^{m+1}$ admits the representation
\[
f(x)=\sum_A f_A(x)e_A,
\]
where $f_A:\Omega \rightarrow \mathbb{R}$ are real-valued functions. The higher-dimensional counterpart of holomorphic functions is provided by left- and right-monogenic functions, defined as the kernel of the generalized Cauchy–Riemann operator
\begin{equation} \label{dbaroperator}
\overline{\partial} := \frac{1}{2} \left( \partial_0 + \partial_{\mathbf{x}} \right),
\end{equation}
where the action is understood from the left (respectively, from the right) on $f.$ Here $\partial_0 = \frac{\partial}{\partial x_0}$   and $\partial_{\mathbf{x}}$ denotes the Dirac operator associated with the spatial variables
\begin{equation} \label{dx}
\partial_{\mathbf{x}} = \sum_{k=1}^m e_k \frac{\partial}{\partial x_k}.
\end{equation}
For monogenic functions, the hypercomplex derivative is defined through the action of the conjugate operator
\begin{equation} \label{del-operator}
\partial := \frac{1}{2}(\partial_0 - \partial_{\mathbf{x}}).
\end{equation}
This derivative coincides with the hypercomplex (areolar) derivative of a monogenic function in the sense of Pompeiu, as shown in \cite{Gurlebeck-1999}. For $m>1,$ the existence of this derivative is a necessary and sufficient condition for monogenicity. For the paravector $x,$ the action of the generalized Cauchy–Riemann operator \eqref{dbaroperator} yields  
\begin{equation}\label{nonmonogenic}
	\overline{\partial} x = x \overline{\partial} =  \frac{1}{2}(\partial_0  x_0 + \partial_{\mathbf{x}} \mathbf{x}) = \frac{1}{2}(1-m). 
\end{equation}
Similarly, applying the conjugate operator \eqref{del-operator} gives
\begin{equation}\label{nonderivada}
	{\partial} x = x {\partial}  = \frac{1}{2}(\partial_0  x_0 - \partial_{\mathbf{x}} \mathbf{x}) = \frac{1}{2}(1+m). 
\end{equation}
Together, these identities show that only in the complex case $m=1$ the paravector $x$ is monogenic and its hypercomplex derivative is equal to $1$. A direct computation also shows that the integer powers of $x$ are not monogenic in general (see \cite{Cacao-2024}). Consequently, monogenic polynomials cannot be obtained by a straightforward extension of ordinary polynomials in the variable $x.$ This observation motivates the introduction of the hypercomplex Appell polynomials with respect to $\partial,$ denoted by $\{ \mathcal{P}_k^m(x,\bar{x}) \}_{k \ge 0}.$  These polynomials have been explicitly constructed in \cite{Falcao-2006, Malonek-Falcao-2006} and are given by
\begin{equation} \label{appell-pol}
\mathcal{P}_k^m(x,\bar{x}) = \sum_{s=0}^k T_s^k(m) x^{k-s} \bar{x}^s.
\end{equation}
Here, the coefficients $T_s^k(m)$ can be expressed in terms of the Pochhammer symbol
$$(a)_r: =\frac{\Gamma(a+r)}{\Gamma(a)} = a(a+1)(a+2)\cdots (a+r-1), \quad r \ge 1, \qquad (a)_0 = 1,$$ 
as
\begin{equation} \label{num-tria}
			T_s^k(m) = \frac{k!}{(m)_k} \frac{\left( \frac{m+1}{2} \right)_{{k-s}}}{(k-s)!} \frac{\left( \frac{m-1}{2} \right)_{s}}{s!}, \qquad m\ge1, \quad k \ge 0, \quad s=0,1,\dots, k.
		\end{equation}
The presence of both $x$ and $\bar x$ in \eqref{appell-pol} reflects a fundamental difference with respect to the complex case. Indeed, due to \eqref{nonmonogenic} and \eqref{nonderivada}, monogenic functions in the hypercomplex setting do not generally admit a power series expansion solely in terms of $x.$ The coefficients $T_s^k(m)$ are chosen such that the sequence satisfies the Appell property
\begin{equation*}
	\partial \mathcal{P}_k^m(x,\bar x) = k \mathcal{P}_{k-1}^m(x,\bar x), \quad  \quad \mathcal{P}_0^m(x,\bar x) \equiv 1.
\end{equation*}
They encode the dependence on the dimension $m$ and determine the combination of $x$ and  $\bar x$ required to preserve monogenicity.

Several relevant properties of the triangular coefficients $T_s^k(m)$
highlight their fundamental role in the construction of hypercomplex
Appell polynomials. For each fixed homogeneous degree $k$, these
coefficients form a partition of unity \cite[Theorem 3.7]{Falcao-2012}.
Moreover, their alternating combination gives rise to the generalized
Vietoris numbers $c_k(m)$, which have been investigated in \cite{CMF-2019}. More precisely, these relations take the form
\begin{equation}\label{unit-sum}
	\sum_{s=0}^k T_s^k(m)=1,
	\qquad
	c_k(m)=\sum_{s=0}^k(-1)^sT_s^k(m),
	\qquad k=0,1,\dots .
\end{equation}
These identities show that the coefficients $T_s^k(m)$ contain
combinatorial information beyond their direct role in the definition of
hypercomplex Appell polynomials.

For $m=1$, the triangular array reduces to a single non-zero entry,
since
\[
T_0^k(1)=1,\qquad T_s^k(1)=0,\quad s>0.
\]
This case corresponds to the Clifford algebra
${\mathcal C\ell}_{0,1}\cong\mathbb{C}$ and recovers the classical
complex Appell system of monomials
\[
x^k\cong z^k=(x+iy)^k,
\]
which satisfies
\[
(z^k)'=kz^{k-1},\qquad z^0=1.
\]
Thus, the standard complex polynomial framework appears as the one-dimensional realization of the generalized hypercomplex construction.

For $m\geq2$, the coefficients $T_s^k(m)$ describe a genuinely
higher-dimensional structure. Although the alternating sums are related to the elements  $c_k(m)$ of the generalized Vietoris sequences (see 
\eqref{unit-sum}), the main object of interest here is the full
two-indexed family $T_s^k(m)$, which encodes the dependence of the
hypercomplex Appell polynomials on the dimension parameter. In
particular, unlike the one-dimensional coefficient sequences associated
with Taylor expansions of holomorphic functions, the representation of
monogenic functions through the variables $x$ and $\bar{x}$ naturally
requires a bidimensional coefficient array. Indeed, the indices $k$ and $s$ have distinct algebraic meanings: the first one
labels the homogeneous degree of the Appell polynomial, while the second
one counts the contribution of the conjugate variable $\bar{x}$. This
structure leads to the triangular arrangement displayed in
Table~\ref{table2}. Although this representation resembles the classical
Pascal-Tartaglia triangle, the coefficients $T_s^k(m)$ are governed by
the non-commutative Clifford algebra setting and therefore require a
different geometric interpretation.
\begin{table}[ht]
\centering
\caption{The first four rows of the triangular array $T_s^k(m)$.}
\label{table2}
\renewcommand{\arraystretch}{1.8}
\setlength{\tabcolsep}{8pt}
\[
\begin{array}{c|cccc}
k\backslash s & 0 & 1 & 2 & 3\\
\hline
0 &
1
\\[1ex]
1 &
\dfrac{m+1}{2m} &
\dfrac{m-1}{2m}
\\[1ex]
2 &
\dfrac{m+3}{4m} &
\dfrac{2(m-1)}{4m} &
\dfrac{m-1}{4m}
\\[1ex]
3 &
\dfrac{(m+5)(m+3)}{8m(m+2)} &
\dfrac{3(m+3)(m-1)}{8m(m+2)} &
\dfrac{3(m+1)(m-1)}{8m(m+2)} &
\dfrac{(m+3)(m-1)}{8m(m+2)}
\end{array}
\]
\end{table}

Higher-dimensional extensions of this triangular structure provide a
natural link with multidimensional hypercomplex polynomial systems.
In particular, the combinatorial organization of the coefficients
suggests higher-dimensional analogues of Pascal-type structures
\cite{CCFMT-2025}, which reflect the increasing complexity of
hypercomplex Appell systems beyond the classical complex case.

While these structures have been mainly studied within hypercomplex
function theory, the following sections show that they also emerge
naturally in the transfer-matrix formulation of the quantum scattering
problem. In this context, the correspondence is not immediate: the
triangular organization of the Appell coefficients is modified by the
number of barriers $N$ in the finite Kronig--Penney model and by the
dimension-dependent scaling contained in the coefficients
\eqref{num-tria}. By reformulating the scattering problem through a
Dirac-type equation, we demonstrate that the appearance of these
triangular arrays is intrinsically connected with the underlying
structure of quantum propagation.

%%%%%%%%%%%%%%%%%%%%%%%%%%%%%%%%%%%%%%%%%%%%%%%%%%%%%%%%%%%%%%%%%%%%%%%%%%%%%%%%%%%%%%%%%%%%%%%%%%%%%%%%%%%%%%%%%%%%%%%%%%%%%%%%%%%%%%%%%%%%%%%%%%%%%%%%%%%%
\section{Clifford  representations and associated projection operators} \label{sec:3}

To formulate the scattering problem in matrix form, we introduce a
matrix representation of the Clifford algebra. The Clifford generators
are realized by matrices satisfying the defining anti-commutation
relations, which allows the corresponding Dirac-type operator to be
represented as a finite-dimensional matrix operator.

The simplest matrix representation of  $\mathcal C\ell_{0,3}(\mathbb R)$  is given   by the Pauli matrices \cite{Lounesto-2001} 
\begin{equation} \label{sigmaMat}
    \sigma_1=
\begin{pmatrix}
0&1\\
1&0
\end{pmatrix},
\qquad
\sigma_2=
\begin{pmatrix}
0&-\im\\
\im&0
\end{pmatrix},
\qquad
\sigma_3=
\begin{pmatrix}
1&0\\
0&-1
\end{pmatrix}.
\end{equation}
These $2\times 2$ matrices provide the basic ingredient for the construction of the $\gamma$-matrices required for systems with an arbitrary number of barriers.
 
To describe a system with $N$ Dirac delta barriers, we extend this formulation to a family of Clifford generators acting on the complex vector space $\mathbb{C}^{d}$, with
\[
d=2^{\lceil N/2\rceil}.
\]
The defining anti-commutation relations are satisfied by $d\times d$ complex matrices, constructed recursively by means of the Kronecker product $\otimes$ to enlarge the representation space as $N$ grows.

For the elementary case $N=1$, the Pauli representation provides the
starting point of the recursive construction. We define
\begin{equation} \label{caseN1}
\gamma^{(1)}_0=\im \sigma_3,\qquad
\gamma^{(1)}_1=\im \sigma_1,\qquad
\gamma^{(1)}_2=\im \gamma^{(1)}_0\gamma^{(1)}_1=\sigma_2 .
\end{equation}
For $N=2$, the same two-dimensional representation is retained, namely
\[
\gamma^{(2)}_i=\gamma^{(1)}_i,\qquad i=0,1,2.
\]
For higher-order systems, $N\geq3$, the generators are constructed
recursively. In the odd-dimensional case, we define
\begin{equation*}
\gamma^{(N)}_0=\im \sigma_3\otimes\mathbbm{1}_{d/2},
\qquad
\gamma^{(N)}_{i}=\im\sigma_1\otimes\gamma^{(N-1)}_{i-1},
\qquad i=1,\dots,N,
\end{equation*}
where $\mathbbm{1}_{d/2}$ represents the identity matrix of order $d/2$.
The family of $\gamma$-matrices is completed by defining the product
matrix
\begin{equation*}
\gamma^{(N)}_{N+1}
=
\im \prod_{j=0}^{N}\gamma^{(N)}_j .
\end{equation*}
For the subsequent even value of $N$, the same representation is retained by setting
\[
\gamma^{(N+1)}_i=\gamma^{(N)}_i,
\qquad
i=0,1,\ldots,N+1.
\]

The above recursive construction ensures that, for every $N$, the matrices satisfy the Clifford anti-commutation relations
\begin{equation}\label{clif}
\{\gamma_r^{(N)},\gamma_s^{(N)}\}
=
-2\delta_{rs}\mathbbm{1}_d,
\qquad
r,s=1,2,\ldots,N.
\end{equation}
Accordingly, the matrices
$\{\gamma_r^{(N)}\}_{r=1}^{N}$
generate the Clifford algebra
$\mathcal C\ell_{0,N}$,
whereas $\gamma_0$ is distinguished and will be associated with the evolution variable in the Dirac-type formulation developed in Section~\ref{sec:4}. This representation provides the algebraic framework for constructing the Dirac operator, deriving the transfer matrix, and establishing the structural results presented in the subsequent sections.

For notational simplicity, the superscript $(N)$ and the matrix
dimension $d$ will be omitted whenever no ambiguity arises. 

The combinations of Clifford generators
\begin{equation} \label{Gamma_n}
\Gamma_n:=\gamma_0\gamma_n,\qquad n=1,2,\ldots,N,
\end{equation}
will be used to define a family of projection operators. From the
Clifford relations \eqref{clif}, these matrices satisfy
\begin{equation} \label{antiGAM}
\{\Gamma_r,\Gamma_s\}=-2\delta_{rs}\mathbbm{1}.
\end{equation}
We then define the family of operators
\begin{equation} \label{Pn}
P_n:=\frac12(\mathbbm{1}-\im \Gamma_n),
\qquad n=1,2,\ldots,N .
\end{equation}
Their algebraic properties are summarized in the following proposition.
\begin{proposition}\label{pro1}
Let $P_n$ be the operators defined by \eqref{Pn}, where
$\Gamma_n=\gamma_0\gamma_n$ and the matrices $\gamma_r$ satisfy the anti-Euclidean Clifford relations \eqref{clif}. Then the following properties hold:
\begin{enumerate}
\item[(a)] Each $P_n$ is an orthogonal projector, namely
\[
P_n^\dagger=P_n,\qquad P_n^2=P_n .
\]
\item[(b)] The anti-commutator of two projectors is given by
\begin{equation}\label{anti}
\{P_r,P_s\}
=
P_r+P_s+\frac12(\delta_{rs}-1)\mathbbm{1}.
\end{equation}
\end{enumerate}
\end{proposition}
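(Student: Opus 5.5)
The plan is a direct algebraic computation from the Clifford relations, preceded by two facts about $\gamma_0$ that \eqref{clif} does not state. Relation \eqref{clif} covers only the indices $r,s=1,\dots,N$. The computation also needs
\[
\gamma_0^2=-\mathbbm{1},\qquad \{\gamma_0,\gamma_n\}=0 \quad (n\ge 1),
\]
and, for the hermiticity claim, that every $\gamma_r$ is anti-Hermitian, $\gamma_r^\dagger=-\gamma_r$.

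I would establish these from the recursive construction by induction on $N$. The base case \eqref{caseN1} is immediate: $\gamma_0^{(1)}=\im\sigma_3$ and $\gamma_1^{(1)}=\im\sigma_1$ are $\im$ times Hermitian, mutually anticommuting Pauli matrices, so each squares to $-\mathbbm{1}$ and they anticommute. For the inductive step I use $(A\otimes B)(C\otimes D)=AC\otimes BD$ and $(A\otimes B)^\dagger=A^\dagger\otimes B^\dagger$:
\begin{itemize}
\item $\gamma_0^{(N)}=\im\sigma_3\otimes\mathbbm{1}$ anticommutes with $\im\sigma_1\otimes\gamma^{(N-1)}_{i-1}$, because $\sigma_3$ and $\sigma_1$ anticommute.
\item $\{\gamma^{(N)}_i,\gamma^{(N)}_j\}=-\{\gamma^{(N-1)}_{i-1},\gamma^{(N-1)}_{j-1}\}\otimes\mathbbm{1}$ up to the factor $\sigma_1^2=\mathbbm{1}$, which reduces to the inductive hypothesis including the index $0$.
\item Anti-Hermiticity passes through the tensor product, since $\im\sigma_1$ is anti-Hermitian and $\gamma^{(N-1)}_{i-1}$ is anti-Hermitian, so their tensor product is Hermitian times $\im\cdot\im$ sign-consistently anti-Hermitian.
\end{itemize}
This bookkeeping, including the sign conventions for the product matrix $\gamma_{N+1}$, is the step I expect to be the main (though routine) obstacle. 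Alternatively, these properties are precisely what "anti-Euclidean Clifford relations" should mean with the index $0$ included, and I would state that explicitly as the standing assumption.

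With these facts, part (a) follows in two lines. First,
\[
\Gamma_n^2=\gamma_0\gamma_n\gamma_0\gamma_n=-\gamma_0^2\gamma_n^2=-\mathbbm{1}.
\]
Second,
\[
\Gamma_n^\dagger=\gamma_n^\dagger\gamma_0^\dagger=\gamma_n\gamma_0=-\Gamma_n,
\]
so $\im\Gamma_n$ is Hermitian, $(\im\Gamma_n)^2=\mathbbm{1}$, and hence $P_n^\dagger=P_n$. Expanding the square gives
\[
P_n^2=\tfrac14\big(\mathbbm{1}-2\im\Gamma_n+(\im\Gamma_n)^2\big)=\tfrac14\big(2\mathbbm{1}-2\im\Gamma_n\big)=P_n .
\]
As a check of \eqref{antiGAM}, $\{\Gamma_r,\Gamma_s\}=-\gamma_0^2\{\gamma_r,\gamma_s\}=-2\delta_{rs}\mathbbm{1}$.

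For part (b), I expand
\[
\{P_r,P_s\}=\tfrac14\big(2\mathbbm{1}-2\im\Gamma_r-2\im\Gamma_s-\{\Gamma_r,\Gamma_s\}\big)
\]
and substitute \eqref{antiGAM}, which gives
\[
\tfrac12\mathbbm{1}-\tfrac{\im}{2}(\Gamma_r+\Gamma_s)+\tfrac12\delta_{rs}\mathbbm{1}.
\]
On the other hand, $P_r+P_s=\mathbbm{1}-\tfrac{\im}{2}(\Gamma_r+\Gamma_s)$. Subtracting, the difference is $\tfrac12(\delta_{rs}-1)\mathbbm{1}$, which is \eqref{anti}. For $r=s$ this is consistent with $2P_r^2=2P_r$ from part (a).
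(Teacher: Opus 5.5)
Your computation for (a) and (b) is the paper's own argument, and each step is correct: anti-Hermiticity gives $\Gamma_n^\dagger=-\Gamma_n$, then $\Gamma_n^2=-\mathbbm{1}$, then $\{P_r,P_s\}$ is expanded with \eqref{antiGAM}. You are also right that this uses $\gamma_0^2=-\mathbbm{1}$, $\{\gamma_0,\gamma_n\}=0$ and $\gamma_r^\dagger=-\gamma_r$. None of these is contained in \eqref{clif} as written, and the paper uses them tacitly.

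The problem is the induction you offer as your primary way of deriving these facts from the recursive construction of Section~\ref{sec:3}. It cannot succeed, because that construction does not have these properties.
\begin{itemize}
\item In your second bullet, $(\im\sigma_1)^2=-\mathbbm{1}$ gives $\{\gamma^{(N)}_i,\gamma^{(N)}_j\}=-\mathbbm{1}\otimes\{\gamma^{(N-1)}_{i-1},\gamma^{(N-1)}_{j-1}\}$. By the inductive hypothesis this equals $+2\delta_{ij}\mathbbm{1}$, so the step flips the signature instead of reproducing the hypothesis.
\item In your third bullet, $(\im\sigma_1\otimes A)^\dagger=(-\im\sigma_1)\otimes(-A)=\im\sigma_1\otimes A$. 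A tensor product of two anti-Hermitian matrices is Hermitian, not anti-Hermitian.
\item The starting data already fail. $\gamma^{(1)}_2=\im\gamma_0\gamma_1=\sigma_2$ is Hermitian with $\sigma_2^2=+\mathbbm{1}$. So for $N=2$ one gets $\Gamma_2=\gamma_0\gamma_2=\sigma_1$, and $P_2=\tfrac12(\mathbbm{1}-\im\sigma_1)$ is neither Hermitian nor idempotent.
\item For $N=3$, $\gamma^{(3)}_1=\im\sigma_1\otimes\im\sigma_3$ squares to $+\mathbbm{1}$.
\end{itemize}
So the conclusion genuinely depends on the hypothesis. It cannot be derived from the explicit matrices of Section~\ref{sec:3} as printed; repairing them is a correction of that section, not something you can prove.

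Drop the induction and take your alternative route. Assume $\{\gamma_r,\gamma_s\}=-2\delta_{rs}\mathbbm{1}$ for $r,s=0,1,\dots,N$, together with $\gamma_r^\dagger=-\gamma_r$, as standing assumptions. With these, your proof is complete and coincides with the paper's.
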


\begin{proof} 
\begin{enumerate}
\item[(a)] 
 Since the anti-Euclidean generators are skew-Hermitian,
$\gamma_r^\dagger=-\gamma_r,$ we have
$\Gamma_n^\dagger = (\gamma_0\gamma_n)^\dagger
=
\gamma_n^\dagger\gamma_0^\dagger
=
\gamma_n\gamma_0
=
-\gamma_0\gamma_n = -\Gamma_n.$
Therefore, $(\im \Gamma_n)^\dagger = \im \Gamma_n,$ and hence
\[
P_n^\dagger
=
\frac12
\left(\mathbbm1-(\im \Gamma_n)^\dagger\right)
=
\frac12(\mathbbm1-\im \Gamma_n)
=
P_n .
\]
Moreover, using \eqref{antiGAM} we obtain $\Gamma_n^2 = - \mathbbm{1}.$ Consequently,
\[
P_n^2 =
\frac14
\left(
\mathbbm1
-2\im \Gamma_n
+\im^2 \Gamma_n^2
\right) =
\frac14
\left(
2\mathbbm1-2\im \Gamma_n
\right)
= P_n.
\]
\item[(b)]  
Using \eqref{antiGAM}, we obtain
\[
\begin{aligned}
\{P_r,P_s\}
&=
\frac14
\left(
2\mathbbm1
-2\im(\Gamma_r+\Gamma_s)
-\{\Gamma_r,\Gamma_s\}
\right)\\
&=
\frac14
\left(
2\mathbbm1
-2\im(\Gamma_r+\Gamma_s)
+2\delta_{rs}\mathbbm1
\right).
\end{aligned}
\]
Since
\[
P_r+P_s
=
\mathbbm1- \frac12 \im(\Gamma_r+\Gamma_s),
\]
we finally obtain
\[
\{P_r,P_s\}
=
P_r+P_s+\frac12(\delta_{rs}-1)\mathbbm1 ,
\]
which proves the result.
\end{enumerate}
\end{proof}

%%%%%%%%%%%%%%%%%%%%%%%%%%%%%%%%%%%%%%%%%%%%%%%%%%%%%%%%%%%%%%%%%%%%%%%%%%%%%%%%%%%%%%%%%%%%%%%%%%%%%%%%%%%%%%%%%%%%%%%%%%%%%%%%%%%%%%%%%%%%%%%%%%%%%%%%%%%%

\section{Dirac-type formulation of the Kronig-Penney model} \label{sec:4}

To connect the scalar Schrödinger equation \eqref{SCA} with the Clifford-algebraic setting of Section~\ref{sec:2}, we first rewrite \eqref{SCA} by defining the coupling parameter $\mu$ and the wave number $k$,
\[
\mu = \frac{2 m \lambda}{\hbar^2}, \qquad k = \frac{\sqrt{2 m E}}{\hbar}.
\]
In terms of these parameters, equation \eqref{SCA} takes the form
\begin{equation}\label{SCAA}
\left[-\frac{\mathrm{d}^2}{\mathrm{d}x^2}-\mu\sum_{n=1}^{N}\delta(x-nL)\right]\psi(x) = k^2\psi(x).
\end{equation}
We then factorize this second-order operator into an equivalent first-order Dirac-type system—a procedure we call \emph{spinorization}. To this end, the scalar wave function $\psi(x)$ is promoted to a spinor-valued function $\Psi(x)$ in the representation space of $\mathcal C\ell_{0,N}(\mathbb{R})$, laying the groundwork for the transfer-matrix formalism in the subsequent sections.

Since we consider a scattering problem with positive energy, the corresponding spinor field $\Psi(x)$ is generally not square-integrable over $\mathbb{R}.$ Accordingly, we work in the space $\Psi \in L^2_{\rm loc}(\mathbb{R}^+) \otimes \mathbb{C}^{2^{\lceil N/2 \rceil}},$ supplemented by the matching conditions at the points $x = nL$ generated by the Dirac delta interactions.

Consider the Dirac-type equation
\begin{equation}\label{RicG}
\mathcal{D}\Psi(x)=\kappa\Psi(x),
\qquad
\mathcal{D}:=
\gamma_0\frac{d}{dx}
-
\im \sum_{n=1}^{N}\gamma_nV_n(x),
\end{equation}
where the matrices $\gamma_n$ are constructed recursively as described in
Section~\ref{sec:3}. The corresponding spinor-valued function
$\Psi(x)$ takes values in the representation space
$\mathbb{C}^{d}$, with $d=2^{\lceil N/2\rceil}$. Applying the operator $\mathcal{D}$ once more to equation \eqref{RicG}
and using \eqref{clif}, we obtain
\begin{equation}\label{RicH}
\left[
-\frac{d^2}{dx^2}
+
\sum_{n=1}^{N}V_n^2(x)
-
\im \sum_{n=1}^{N}\gamma_0\gamma_n V_n'(x)
\right]\Psi(x)
=
\kappa^2\Psi(x).
\end{equation}
Indeed, the first-order derivative terms cancel due to the
anti-commutation relations between $\gamma_0$ and $\gamma_n$, while the mixed contributions in the square of the potential term vanish by the Clifford algebra relations. Equation \eqref{RicH} is a second-order system for the spinor-valued function $\Psi(x)$, in which the coupling between the components of the spinor is encoded in the matrices $\im \gamma_0\gamma_n$. In terms of the projection operators introduced in Proposition~\ref{pro1}, one has
\begin{equation*}
\im \gamma_0\gamma_n=\im \Gamma_n=\mathbbm{1}-2P_n .
\end{equation*}
Substituting this identity into \eqref{RicH} yields
\begin{equation}\label{RicHP}
\left[
-\frac{d^2}{dx^2}
+
\sum_{n=1}^{N}V_n^2(x)
+
\sum_{n=1}^{N}(2P_n-\mathbbm{1})V_n'(x)
\right]\Psi(x)
=
\kappa^2\Psi(x).
\end{equation}
This formulation makes explicit the algebraic role of the projectors
$P_n$ and provides the basis for the subsequent construction of a
global projection operator adapted to the structure of the interaction.
\begin{proposition} \label{pro2}Let $\{P_n\}_{n=1}^N$ be the set of projection operators defined in \eqref{Pn}. Let $\Lambda(x) := \sum_{n=1}^N V'_n(x)$ denote the sum of the potential derivatives, and define the weighted operator
\begin{equation} \label{Px}
P(x) := \frac{1}{\Lambda(x)} \sum_{n=1}^N V'_n(x) P_n.
\end{equation}
Suppose that for $N \ge 2$, the derivatives $V'_n(x)$ satisfy the orthogonality-like condition
\begin{equation} \label{cond-pro}
\sum_{r \neq s} V'_r(x) V'_s(x) = 0.
\end{equation}
Then, $P(x)$ is a projection operator.
\end{proposition}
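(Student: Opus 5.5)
The plan is to verify directly the two defining properties from Proposition~\ref{pro1}(a) for the weighted operator $P(x)$: self-adjointness and idempotency. Throughout, the potentials $V_n$ are taken to be real-valued, and we work at points $x$ where $\Lambda(x)\neq 0$, so that \eqref{Px} is well defined.

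\emph{Self-adjointness.} Each $P_n$ is Hermitian by Proposition~\ref{pro1}(a), and the weights $V_n'(x)/\Lambda(x)$ are real scalars. A real linear combination of Hermitian matrices is Hermitian, so $P(x)^\dagger=P(x)$.

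\emph{Idempotency.} Expanding the square gives
\[
P(x)^2=\frac{1}{\Lambda^2}\sum_{r,s=1}^N V_r'V_s'\,P_rP_s .
\]
The scalar coefficients $V_r'V_s'$ are symmetric in $(r,s)$, so only the symmetric part of $P_rP_s$ contributes. We may therefore replace $P_rP_s$ by $\tfrac12\{P_r,P_s\}$ and apply the anticommutator formula \eqref{anti} from Proposition~\ref{pro1}(b):
\[
\sum_{r,s}V_r'V_s'P_rP_s=\frac12\sum_{r,s}V_r'V_s'\Big(P_r+P_s+\tfrac12(\delta_{rs}-1)\mathbbm 1\Big).
\]
The resulting terms are handled one at a time.

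\emph{Projector terms.} The terms containing $P_r$ and $P_s$ are equal after relabeling. Summing over the free index produces a factor $\Lambda$, so together they give $\Lambda\sum_r V_r'P_r$.

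\emph{Scalar term.} Since $\sum_{r,s}V_r'V_s'(\delta_{rs}-1)=-\sum_{r\neq s}V_r'V_s'$, the scalar term equals $-\tfrac14\sum_{r\neq s}V_r'V_s'\,\mathbbm 1$. This vanishes by hypothesis \eqref{cond-pro}.

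Dividing by $\Lambda^2$ then yields $P(x)^2=\Lambda^{-1}\sum_r V_r'P_r=P(x)$, as required.

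For $N=1$ no condition is needed. In that case $P(x)=P_1$, which is a projector by Proposition~\ref{pro1}(a); consistently, the sum over $r\neq s$ is empty.

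The only delicate point is the symmetrization step. One must observe that the possible non-commutativity of $P_r$ and $P_s$ is harmless, because the antisymmetric part $[P_r,P_s]$ is paired with the symmetric coefficients $V_r'V_s'$ and drops out. Once this is seen, the whole argument reduces to the identity \eqref{anti} combined with \eqref{cond-pro}. The only remaining caveat is that the statement is understood pointwise, at each $x$ where $\Lambda(x)\neq0$.
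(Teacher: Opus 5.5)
Your proposal is correct and follows the paper's proof essentially step for step. Hermiticity comes from taking a real combination of the Hermitian $P_n$. Idempotency comes from symmetrizing $P_rP_s$ into $\tfrac12\{P_r,P_s\}$, applying \eqref{anti}, collecting the projector terms into $\Lambda\sum_r V_r'P_r$, and removing the scalar term with \eqref{cond-pro}. Your added caveats (real-valued $V_n'$, and working pointwise where $\Lambda(x)\neq 0$) are assumptions the paper leaves implicit, so stating them is a small improvement.
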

\begin{proof} 
The operator $P(x)$ is Hermitian as a consequence of Proposition~\ref{pro1},
since it is a real linear combination of the orthogonal projection
operators $P_n$. Indeed,
\begin{equation*}
P(x)^\dagger
=
\frac{1}{\Lambda(x)}
\sum_{n=1}^{N}V_n'(x)P_n^\dagger
=
\frac{1}{\Lambda(x)}
\sum_{n=1}^{N}V_n'(x)P_n
=
P(x).
\end{equation*}
It remains to verify the idempotency property. For $N=1$, the result
follows immediately from Proposition~\ref{pro1}, since $P(x)=P_1$.
For $N\geq2$, we consider the square of $P$
\begin{equation*} \label{P2calc}
P^2 = \frac{1}{\Lambda^2} \sum_{r,s=1}^N V'_r V'_s P_r P_s = \frac{1}{2\Lambda^2} \sum_{r,s=1}^N V'_r V'_s \{P_r, P_s\},
\end{equation*}
where we have exploited the symmetry of the coefficients $V'_r V'_s$ to introduce the anti-commutator $\{P_r, P_s\}$. Substituting the identity \eqref{anti}, we obtain
\begin{equation*}
P^2 = \frac{1}{2\Lambda^2} \sum_{r,s=1}^N V'_r V'_s (P_r + P_s) + \frac{1}{4\Lambda^2}  \sum_{r,s=1}^N V'_r V'_s (\delta_{rs} - 1) \mathbb{1}.
\end{equation*}
The first summation on the right-hand side simplifies to (see \eqref{Px})
\begin{equation*}
\frac{1}{2\Lambda^2} \left[ \left( \sum_{s=1}^N V'_s \right) \sum_{r=1}^N V'_r P_r + \left( \sum_{r=1}^N V'_r \right) \sum_{s=1}^N V'_s P_s \right] = \frac{1}{2\Lambda^2} \left[ \Lambda (\Lambda P) + \Lambda (\Lambda P) \right] = P.
\end{equation*}
The second summation, involving the identity matrix, can be rewritten as
\begin{equation*}
\sum_{r,s=1}^N V'_r V'_s (\delta_{rs} -1) = \sum_{r=1}^N (V'_r)^2 - \sum_{r,s=1}^N V'_r V'_s = -\sum_{r \neq s} V'_r V'_s.
\end{equation*}
By the hypothesis \eqref{cond-pro}, this term vanishes identically. Consequently, $P^2 = P$, which completes the proof.
\end{proof}
As a consequence of the condition \eqref{cond-pro}, the second-order
equation \eqref{RicHP} simplifies to
\begin{equation*}\label{RicHnew}
\left[
-\frac{d^2}{dx^2}
+
\sum_{n=1}^{N}\left(V_n^2-V_n'\right)
\right]\Psi(x)
=
\kappa^2\Psi(x).
\end{equation*}
To reproduce the Schrödinger equation \eqref{SCAA}, we impose the
matching condition
\begin{equation}\label{cond-match}
\sum_{n=1}^{N}\left(V_n^2-V_n'\right)-\kappa^2
=
-\mu\sum_{n=1}^{N}\delta(x-nL)-k^2.
\end{equation}
A natural choice is to set $V_n'$ as a single delta interaction, i.e.,
\begin{equation}\label{derVn}
V_n'(x)=\mu\delta(x-nL).
\end{equation}
The corresponding supports are mutually disjoint for different values
of $n$, and hence the compatibility condition \eqref{cond-pro} is
automatically fulfilled. Integrating \eqref{derVn} yields the family of step potentials
\begin{equation}\label{Vn}
V_n(x) = \mu \theta(x-nL) + C_n,
\end{equation}
where $\theta$ denotes the Heaviside step function and $C_n$ are integration constants. Substituting \eqref{Vn} and \eqref{derVn} into \eqref{cond-match}, and using the property $\theta^2 = \theta$ almost everywhere, we obtain
\begin{equation*}
\sum_{n=1}^{N}
\left[
(\mu^2 + 2\mu C_n)\theta(x-nL) + C_n^2
\right]  
- \mu\sum_{n=1}^{N}\delta(x-nL) - \kappa^2 
= -\mu\sum_{n=1}^{N}\delta(x-nL) - k^2.
\end{equation*}
The $\theta$-dependent terms are eliminated by choosing
\[
C_n = -\frac{\mu}{2},
\]
which leads to
\begin{equation} \label{kappa2}
\kappa^2 = \sum_{n=1}^{N}C_n^2 + k^2 = \frac{N\mu^2}{4} + k^2.
\end{equation}
Consequently, the Schrödinger equation \eqref{SCAA} is recast as the Dirac-type system
\begin{equation}\label{DIRAC}
\left(
\gamma_0\frac{\mathrm{d}}{\mathrm{d}x}
-\mathrm{i} \sum_{n=1}^{N}\gamma_n V_n(x)
\right)\Psi(x)
=
\kappa\Psi(x),
\end{equation}
where  
\begin{equation}\label{potk-DIRAC}
V_n(x)=\mu\left[\theta(x-nL)-\frac{1}{2}\right],
\qquad
\kappa=\frac{1}{2}\sqrt{N\mu^2+4k^2}
\end{equation}
(cf. \eqref{RicG}, \eqref{Vn}, and \eqref{kappa2}).
This construction provides a factorization of the second-order
Schrödinger operator into a first-order Clifford-valued system, in
close analogy with the factorization techniques of supersymmetric
quantum mechanics; see, for example, \cite{Cooper-2001,Thaller-1992}.

%%%%%%%%%%%%%%%%%%%%%%%%%%%%%%%%%%%%%%%%%%%%%%%%%%%%%%%%%%%%%%%%%%%%%%%%%%%%%%%%%%%%%%%%%%%%%%%%%%%%%%%%%%%%%%%%%%%%%%%%%%%%%%%%%%%%%%%%%%%%%%%%%%%%%%%%%%%%

\section{Clifford-based exact solutions and scattering properties} \label{sec:5}

In this section, we develop the scattering analysis for the Dirac-type equation \eqref{DIRAC} associated with the finite array of localized potentials introduced above. The piecewise-constant nature of the interaction allows the longitudinal domain to be decomposed into subregions where the differential operator has constant coefficients. This   leads to a construction in terms of local propagators, whose composition yields the global transfer matrix of the system. The resulting Clifford-algebraic formulation provides both an explicit representation of the spinor-valued function and a direct characterization of the corresponding scattering coefficients.

The discontinuity points of the potential profile in \eqref{potk-DIRAC} induce the decomposition of the domain into the intervals
\begin{equation} \label{intervalli}
I_0 = (-\infty, x_1], \quad
I_\ell = [x_\ell,x_{\ell+1}], \quad
I_N=[x_N,+\infty),
\qquad x_\ell=\ell L,\qquad \ell=1,2,\dots,N-1 .
\end{equation}
Since the potentials are constant on each interval $I_r$ for $r=0,1,\dots,N$, equation \eqref{DIRAC} reduces to 
\begin{equation*}\label{eqg0Mr2}
	\gamma_0 \Psi'_r(x)-\im M_r\Psi_r(x)=\kappa\Psi_r(x),
	\qquad
	M_r=\sum_{n=1}^N\gamma_n V_n^{(r)},
	\qquad
	V_n^{(r)}=\mu\Bigl[\theta(rL-nL)-\frac12\Bigr],
\end{equation*}
where  $\Psi_r (x) \equiv \Psi(x)|_{I_r}.$
Multiplying from the left by \((-\gamma_0)\), we obtain the equivalent first-order system 
\begin{equation}\label{Hr}
	\Psi'_r(x)=H_r\Psi_r(x),
	\qquad
	H_r=-\gamma_0(\kappa\mathbbm{1}+\im M_r).
\end{equation}
Since 
\begin{equation}\label{Mr2}
	M_r=\frac{\mu}{2}\sum_{n=1}^N s_n^{(r)}\gamma_n,
	\qquad
	s_n^{(r)}=
	\begin{cases}
	+1, & n=1,\dots,r,\\
	-1, & n=r+1,\dots,N,
	\end{cases}
\end{equation}
is constant on \(I_r\), the matrix \(H_r\) is constant as well. Therefore, the local solution is given by
\begin{equation*}\label{solexp}
	\Psi_r(x)=\exp\bigl((x-x_r)H_r\bigr)\Psi_r(x_r).
\end{equation*}
At each interface \(x_{r+1}\), \(r=0,1,\dots,N-1\), continuity of the spinor field imposes
\[
\Psi_r(x_{r+1})=\Psi_{r+1}(x_{r+1}).
\]
Evaluating the local solutions at two consecutive interfaces, we obtain the matching relation
\begin{equation}\label{eq:matching}
	\Psi_r(x_{r})
	=
	e^{-LH_r}\Psi_{r+1}(x_{r+1}),
\end{equation}
where \(L=x_{r+1}-x_r\) denotes the length of each interval. By iterating the above relation along the chain, the spinors in the asymptotic regions are connected by
\[
\Psi_0(x_0)=\mathcal{T}_N\Psi_N(x_N),
\]
where the global transfer operator is given by
\begin{equation*}\label{eq:global_T}
	\mathcal{T}_N
	=
	e^{-LH_0} e^{-LH_1} \cdots e^{-LH_{N-1}}.
\end{equation*}
%%%%%%%%%%%%%%%%%%%%%%%%%%%%%%%%%%%%%%%%%%%%%%
\paragraph{Elementary cases: $N=1$ and $N=2$.}
For completeness, we illustrate the previous construction in the two simplest
configurations, corresponding to one and two localized interactions. These
examples provide explicit realizations of the wave function and clarify the role of the Clifford generators in the construction of
the transfer matrix. In both cases, the spinor space is two-dimensional and
the Clifford algebra admits a representation in terms of Pauli matrices,
consistently with the convention introduced in Section~\ref{sec:3}.

For a single localized potential, $N=1$, the domain is divided into two
regions, namely $I_0=(-\infty,x_1]$ and $I_1=[x_1,+\infty)$. The spinor field
takes the standard scattering form
\begin{equation*}
\Psi_0(x)=\mathcal A_{0+}e^{\im kx}
+\mathcal A_{0-}e^{-\im kx},
\qquad x\in I_0 ,
\end{equation*}
and
\begin{equation*}
\Psi_1(x)=\mathcal A_{1+}e^{\im kx},
\qquad x\in I_1 ,
\end{equation*}
where $\mathcal A_{0+}$ and $\mathcal A_{0-}$ denote the amplitudes of the
incident and reflected components, respectively, while $\mathcal A_{1+}$ is
the transmitted amplitude. The absence of the reflected component in
$I_1$ implements the outgoing boundary condition in the transmitted region. Substitution of these expressions into \eqref{Hr} gives the
algebraic conditions
\begin{eqnarray*}
    \left(
\pm \im k\mathbbm 1
-H_0
\right)\mathcal A_{0\pm}=0, \qquad
\left(
\im k\mathbbm 1
-H_1
\right)\mathcal A_{1+}=0.
\end{eqnarray*}
A non-trivial solution exists only if the corresponding matrices have vanishing determinant. 
This occurs when
\begin{equation*}
\kappa^2 = k^2 + \frac{\mu^2}{4},
\end{equation*}
reproducing the formula for $\kappa$ in \eqref{potk-DIRAC} when $N=1$. The remaining coefficients are fixed by continuity of the spinor fields at the
interface,
\begin{equation*}
\Psi_0(x_1)=\Psi_1(x_1).
\end{equation*}
Equivalently,
\begin{equation*}
\mathcal A_{0+}e^{\im kx_1}
+\mathcal A_{0-}e^{-\im kx_1}
=
\mathcal A_{1+}e^{\im kx_1},
\end{equation*}
which determines the transmitted amplitude in terms of the incident and
reflected components. In the present formalism, this relation is the
one-barrier realization of the general interface matching condition
\eqref{eq:matching}.

The case $N=2$ provides the first non-trivial example involving an internal
region. The three intervals
\[
I_0=(-\infty,x_1],\qquad
I_1=[x_1,x_2],\qquad
I_2=[x_2,+\infty)
\]
lead to the decomposition
\begin{align*}
\Psi_0(x)&=
\mathcal A_{0+}e^{\im kx}
+\mathcal A_{0-}e^{-\im kx},
\\
\Psi_1(x)&=
\mathcal A_{1+}e^{\im kx}
+\mathcal A_{1-}e^{-\im kx},
\\
\Psi_2(x)&=
\mathcal A_{2+}e^{\im kx}.
\end{align*}
In each interval, the amplitudes satisfy the same local algebraic equation
\begin{equation*}
\left(
\pm \im k\mathbbm 1 -H_r
\right)\mathcal A_{r\pm}=0, \quad r=0,1, \qquad
\left(
\im k\mathbbm 1 -H_2
\right)\mathcal A_{2+}=0,
\end{equation*}
where the sign configurations associated with the three regions are (see \eqref{Mr2})
\begin{equation*}
(s_1^{(0)}, s_2^{(0)})=(-1,-1),\qquad
(s_1^{(1)}, s_2^{(1)})=(1,-1),\qquad
(s_1^{(2)}, s_2^{(2)})=(1,1).
\end{equation*}
Again, the solvability condition is obtained from the vanishing determinant of the local algebraic system. The coefficients in the three regions are finally related through the
continuity conditions
\begin{equation*}
\Psi_0(x_1)=\Psi_1(x_1),
\qquad
\Psi_1(x_2)=\Psi_2(x_2),
\end{equation*}
which reproduce explicitly the composition of local propagators  and give the transfer matrix.

The transfer representation obtained above encodes the complete propagation
through the sequence of localized interactions. We now establish its
connection with the physical scattering observables by analyzing the Clifford
bilinear form associated with the spinor field. This quantity provides a
natural characterization of the transmitted and reflected components and
allows us to derive the corresponding scattering coefficients.

We introduce the Clifford bilinear form
\[
\rho(x)=\Psi^\dagger(x)\gamma_0\Psi(x),
\]
induced by the Clifford representation given in Section~\ref{sec:3}, where the generator $\gamma_0$  satisfies the algebraic relation \eqref{clif}. This generator defines the bilinear structure naturally associated with the chosen Clifford representation. Differentiating $\rho(x)$ with respect to $x$ and using the Dirac-type equation \eqref{Hr}, we obtain
\begin{equation*}
\frac{d}{dx}\rho(x) = 0.
\end{equation*}
 This equation implies the conservation of the probability from the incoming region $(-\infty, x_1]$ to the outgoing region  $[x_N, +\infty)$ (recall that for the scalar Schr\"odinger equation, the conserved probability is given by $\rho_S = 
\frac12 \left(\frac{d\psi^\dagger}{dx} \psi - 
\psi^\dagger \frac{d\psi}{dx}
\right)$).

%%%%%%%%%%%%%%%%%%%%%%%%%%%%%%%%%%%%%%%%%%%%%%%%%%%%%%%%%%%%%%%%%%%%%%%%%%%%%%%%%%%%%%%%%%%%%%%%%%%%%%%%%%%%%%%%%%%%%%%%%%%%%%%%%%%%%%% 
\section{Generalized Cauchy–Riemann formulation and Helmholtz equations} \label{sec:6}

In this section, we show that the Dirac-type equation introduced in \eqref{DIRAC} can be derived from a higher-dimensional relativistic Dirac equation. To this end, we consider the time-independent Dirac equation in $N+1$ spatial dimensions
\begin{equation} \label{anAx}
\left(  \gamma_0 \frac{\partial}{\partial x} +  \sum_{n=1}^N \gamma_n \frac{\partial}{\partial y_n} + \im \mu \gamma_0 \sum_{n=1}^N y_n \delta(x - n L) \right) \Psi(x, {\bf y} ) = \kappa \Psi(x, {\bf y} ),
\end{equation}
where ${\bf y}=(y_1,y_2,\dots,y_N)$ represents the set of transverse coordinates. By means of the gauge transformation
\begin{equation}\label{anB}
\Psi(x,\mathbf{y}) = \exp\left(-\mathrm{i}\sum_{n=1}^{N} y_n V_n(x)\right)\Psi(x),
\end{equation}
with $V_n(x)$ defined as in \eqref{potk-DIRAC}, the stationary equation \eqref{anAx} reduces to the one-dimensional Dirac-type equation \eqref{DIRAC}. Indeed, the transverse dependence is entirely absorbed by the phase factor
\[
\sum_{n=1}^{N}\gamma_n\frac{\partial\Psi(x,\mathbf{y})}{\partial y_n} = -\mathrm{i}\sum_{n=1}^{N}\gamma_n V_n(x)\Psi(x,\mathbf{y}).
\]
Furthermore, differentiating the exponential with respect to $x$ by means of \eqref{derVn} gives
\[
\gamma_0 \left[\frac{\partial}{\partial x} \exp\left(-\mathrm{i}\sum_{n=1}^{N} y_n V_n(x)\right)\right] \Psi(x) = -\mathrm{i}\mu\gamma_0 \sum_{n=1}^{N}y_n\delta(x-nL)\Psi(x,\mathbf{y}),
\]
which cancels the singular delta interactions in \eqref{anAx}. The transformed spinor $\Psi(x)$ thus obeys
\begin{equation*}
\left(\gamma_0\frac{\mathrm{d}}{\mathrm{d}x} - \mathrm{i}\sum_{n=1}^{N}\gamma_n V_n(x)\right)\Psi(x) = \kappa\Psi(x),
\end{equation*}
recovering equation \eqref{DIRAC}. This demonstrates how the reduced Dirac-type model emerges directly from the higher-dimensional relativistic framework, linking it to the first-order Kronig-Penney model.

Restricting equation \eqref{anAx} to the intervals $I_r$ defined in \eqref{intervalli}, where the singular interactions vanish, the spinor field $\Psi_r(x,\mathbf{y}) \equiv \Psi(x,\mathbf{y})|_{I_r}$ obeys the free Dirac equation
\begin{equation}\label{appeA_fixed}
\left(
\gamma_0\frac{\partial}{\partial x}
+\sum_{n=1}^{N}\gamma_n\frac{\partial}{\partial y_n}
\right)
\Psi_r(x,\mathbf{y})
=
\kappa\Psi_r(x,\mathbf{y}).
\end{equation}
Multiplying \eqref{appeA_fixed} from the left by \(-\gamma_0/2\), and using
the Clifford relation \(\gamma_0^2=-\mathbb{1}\), we obtain
\begin{equation*}
\frac12
\left(
\frac{\partial}{\partial x}
-\sum_{n=1}^{N}\Gamma_n
\frac{\partial}{\partial y_n}
\right)
\Psi_r(x,\mathbf{y})
=
-\frac{\kappa}{2}\gamma_0\Psi_r(x,\mathbf{y}),
\end{equation*}
where the Clifford generators \(\Gamma_n\) are defined in \eqref{Gamma_n}.
Thus, introducing the conjugate generalized Cauchy-Riemann operator (see \eqref{dx} and \eqref{del-operator})
\[
D:=
\frac12
\left(
\frac{\partial}{\partial x}
-\sum_{n=1}^{N}\Gamma_n
\frac{\partial}{\partial y_n}
\right),
\]
the Dirac equation can be written as
\begin{equation}\label{CR_system}
D\Psi_r(x,\mathbf{y})
=
-\frac{\kappa}{2}\gamma_0\Psi_r(x,\mathbf{y}).
\end{equation}
Using the conjugation property
\[
\gamma_0D\gamma_0=-\bar D,
\]
we rewrite \eqref{CR_system} as a coupled generalized
Cauchy--Riemann system. Setting
\[
\widetilde{\mathcal A}(x,\mathbf{y})
=\Psi_r(x,\mathbf{y}),
\qquad
\widetilde{\mathcal B}(x,\mathbf{y})
=\gamma_0\Psi_r(x,\mathbf{y}),
\]
we obtain
\begin{equation}\label{newA}
\left\{
\begin{aligned}
D\widetilde{\mathcal A}(x,\mathbf{y})
&=-\frac{\kappa}{2}
\widetilde{\mathcal B}(x,\mathbf{y}),
\\[2mm]
\bar D\widetilde{\mathcal B}(x,\mathbf{y})
&=\frac{\kappa}{2}
\widetilde{\mathcal A}(x,\mathbf{y}).
\end{aligned}
\right.
\end{equation}

This formulation allows the first-order Dirac system to be decoupled into two
second-order equations. Applying \(\bar D\) and \(D\) from the left to the
first and second equations in \eqref{newA}, respectively, yields
\begin{equation*}\label{newB}
\left\{
\begin{aligned}
\bar D D\,\widetilde{\mathcal A}(x,\mathbf{y})
&=
-\frac{\kappa^2}{4}
\widetilde{\mathcal A}(x,\mathbf{y}),
\\[2mm]
D\bar D\,\widetilde{\mathcal B}(x,\mathbf{y})
&=
-\frac{\kappa^2}{4}
\widetilde{\mathcal B}(x,\mathbf{y}).
\end{aligned}
\right.
\end{equation*}
By introducing the characteristic Clifford variables
\[
Z=x+ \frac{1}{N}\sum_{n=1}^{N}\Gamma_n y_n,
\qquad
\bar Z=x- \frac{1}{N} \sum_{n=1}^{N}\Gamma_n y_n,
\]
the generalized Cauchy-Riemann operators admit the identification
\[
D=\partial_{\bar Z},
\qquad
\bar D=\partial_Z .
\]
Hence, setting
\[
\mathcal A(Z,\bar Z):=\widetilde{\mathcal A}(x,\mathbf y),
\qquad
\mathcal B(Z,\bar Z):=\widetilde{\mathcal B}(x,\mathbf y),
\]
the system \eqref{newA} takes the form
\begin{equation}\label{newA-Z}
\left\{
\begin{aligned}
\partial_{\bar Z}\mathcal A(Z,\bar Z)
&=
-\frac{\kappa}{2}\mathcal B(Z,\bar Z),
\\[2mm]
\partial_Z\mathcal B(Z,\bar Z)
&=
\frac{\kappa}{2}\mathcal A(Z,\bar Z).
\end{aligned}
\right.
\end{equation}
Applying \(\partial_Z\) and \(\partial_{\bar Z}\) to the first and second
equations, respectively, and using \eqref{newA-Z}, we obtain the decoupled
second-order equations
\begin{equation}\label{newB-Z}
\left\{
\begin{aligned}
\partial_Z\partial_{\bar Z}\mathcal A(Z,\bar Z)
&=
-\frac{\kappa^2}{4}\mathcal A(Z,\bar Z),
\\[2mm]
\partial_{\bar Z}\partial_Z\mathcal B(Z,\bar Z)
&=
-\frac{\kappa^2}{4}\mathcal B(Z,\bar Z).
\end{aligned}
\right.
\end{equation}
\color{black}
These equations provide the generalized Helmholtz formulation associated with
the Clifford-valued fields \(\mathcal A\) and \(\mathcal B\). Thus, the
original first-order Dirac-type system decomposes into two independent second-order problems in the corresponding spinorial spaces.

%%%%%%%%%%%%%%%%%%%%%%%%%%%%%%%%%%%%%%%%%%%%%%%%%%%%%%%%%%%%%%%%%%%%%%%%%%%%%%%%%%%%%%%%%%%%%%%%%%%%%%%%%%%%%%%%%%%%%%%%%%%%%%%%%%%%%%%%%%%%%%%%%%%%%%%%%%%% 
\section{Appell expansions for Clifford-valued Helmholtz equations} \label{sec:7}

To construct explicit solutions of the Dirac system \eqref{newA-Z}, we
introduce the bivariate Appell polynomial family
$\{\mathcal P_{l,m}(Z,\bar Z)\}_{l,m\in\mathbb N_0},$
whose structure is adapted to the pair of conjugate operators
\(\partial_{\bar Z}\) and \(\partial_Z\). The polynomials satisfy the
following Appell-type relations
\begin{equation}\label{mewK}
\begin{aligned}
\partial_{\bar Z}\mathcal P_{l,m}(Z,\bar Z)
&=
l\,\mathcal P_{l-1,m}(Z,\bar Z),
\\[2mm]
\partial_Z\mathcal P_{l,m}(Z,\bar Z)
&=
m\,\mathcal P_{l,m-1}(Z,\bar Z).
\end{aligned}
\end{equation}
The action of the second-order operators appearing in the generalized Helmholtz equations \eqref{newB-Z} follows directly from these relations. In particular, one has
\begin{equation*}\label{eq:second_order_action}
\begin{aligned}
\partial_{\bar Z}\partial_Z
\mathcal P_{l,m}(Z,\bar Z)
&=
\partial_{\bar Z}
\left(
m\,\mathcal P_{l,m-1}(Z,\bar Z)
\right)
\\
&=
lm\,\mathcal P_{l-1,m-1}(Z,\bar Z),
\end{aligned}
\end{equation*}
while, analogously,
\[
\partial_Z\partial_{\bar Z}
\mathcal P_{l,m}(Z,\bar Z)
=
lm\,\mathcal P_{l-1,m-1}(Z,\bar Z).
\]
Therefore, the Appell family provides a natural polynomial basis adapted to the first-order Dirac system.  Expanding the Clifford-valued fields in the Appell basis
\begin{equation*}\label{eq:Appell_expansion}
\mathcal A(Z,\bar Z)
=
\sum_{l,m\geq0}a_{l,m}\mathcal P_{l,m}(Z,\bar Z),
\qquad
\mathcal B(Z,\bar Z)
=
\sum_{l,m\geq0}b_{l,m}\mathcal P_{l,m}(Z,\bar Z),
\end{equation*}
and substituting these expansions in 
\eqref{newA-Z}, we obtain the coefficient relations
\begin{equation*}\label{eq:Appell_recurrence}
(l+1)a_{l+1,m}
=
-\frac{\kappa}{2}b_{l,m},
\qquad
(m+1)b_{l,m+1}
=
\frac{\kappa}{2}a_{l,m},
\end{equation*}
which immediately imply
\begin{equation*}\label{eq:recurrence_relation}
(l+1)(m+1)a_{l+1,m+1}
=
-\frac{\kappa^2}{4}a_{l,m},
\qquad l,m\geq0 .
\end{equation*}
The conservation of the index difference $s=l-m\in\mathbb Z$ decomposes the solution space into independent lattice diagonals. Given boundary values $\mathbf c_s, \mathbf d_s \in \mathcal C\ell_{0,N}$, the coefficients evolve as
\begin{equation*}
a_{n+s,n} = \frac{(-\kappa^2/4)^n}{n!(n+s)!}\mathbf c_s \quad (s\ge0),
\qquad
a_{n,n+s} = \frac{(-\kappa^2/4)^n}{n!(n+s)!}\mathbf d_s \quad (s>0).
\end{equation*}
Superimposing these modes over $s$ yields the full expansion for $\mathcal A(Z,\bar Z)$:
\begin{equation*}\label{eq:A_explicit_solution}
\mathcal A(Z,\bar Z)
=
\sum_{s=0}^{\infty}
\left[
\sum_{n=0}^{\infty}
\frac{\left(-{\kappa^2}/{4}\right)^n}
{n!(n+s)!}
\mathcal P_{n+s,n}(Z,\bar Z)
\right]\mathbf c_s
+
\sum_{s=1}^{\infty}
\left[
\sum_{n=0}^{\infty}
\frac{\left(-{\kappa^2}/{4}\right)^n}
{n!(n+s)!}
\mathcal P_{n,n+s}(Z,\bar Z)
\right]\mathbf d_s .
\end{equation*}
The companion field \(\mathcal B(Z,\bar Z)\) is then determined by the first-order coupling in \eqref{newA-Z}. Namely,
\[
\mathcal B(Z,\bar Z)
=
\frac{2}{\kappa}
\partial_{\bar Z}\mathcal A(Z,\bar Z).
\]
Using the first relation in \eqref{mewK} we obtain
\begin{equation*}\label{eq:B_explicit_solution}
\mathcal B(Z,\bar Z)
=
-\frac{2}{\kappa}
\sum_{s=0}^{\infty}
\left[
\sum_{n=0}^{\infty}
\frac{\left(-{\kappa^2}/{4}\right)^n}
{n!(n+s)!}
(n+s)\,
\mathcal P_{n+s-1,n}(Z,\bar Z)
\right]\mathbf c_s
-
\frac{2}{\kappa}
\sum_{s=1}^{\infty}
\left[
\sum_{n=0}^{\infty}
\frac{\left(-{\kappa^2}/{4}\right)^n}
{n!(n+s)!}
n\,
\mathcal P_{n-1,n+s}(Z,\bar Z)
\right]\mathbf d_s.
\end{equation*}
\begin{remark}
For \(m=0\), the bivariate Appell family reduces to the family of hypercomplex Appell polynomials in Clifford analysis defined by \eqref{appell-pol}; see, for example,
\cite{FalcaoMalonek2007,Malonek1990,MalonekFalcao2007}.
\end{remark}

\paragraph{Elementary case: \(N=1\).}
For $N=1,$ the Clifford variables coincide with the standard complex coordinates
\[
Z=z,\qquad \bar Z=\bar z ,
\]
and the bivariate Appell family simplifies to the classical bivariate basis
\[
\mathcal P_{l,m}(z,\bar z)=z^m\bar z^l .
\]
Accordingly, the operators \(D\) and \(\bar D\) become the Cauchy--Riemann operators
\[
D=\partial_{\bar z},\qquad \bar D=\partial_z ,
\]
and the generalized Helmholtz system specializes to its classical two-dimensional complex counterpart. Consequently, the Clifford--Appell expansion coincides with the standard Appell expansion in the complex plane.

%%%%%%%%%%%%%%%%%%%%%%%%%%%%%%%%%%%%%%%%%%%%%%%%%%%%%%%%%%%%%%%%%%%%%%%%%%%%%%%%%%%%%%%%%%%%%%%%%%%%%%%%%%%%%%%%%%%%%%%%%%%%%%%%%%%%%%%%%%%%%%%%%%%%%%%%%%%%

\section{Concluding remarks and future developments} \label{sec:FINE}
 
In this work, we have established a novel connection between the Dirac-type systems and the combinatorial structures associated with hypercomplex Appell polynomials. By interpreting the Schrödinger equation of the finite Kronig-Penney model as a consistency condition for a first-order Dirac operator, we have shown that the corresponding scattering problem admits a natural formulation within the framework of Clifford analysis.

Furthermore, the dimensional reduction from the $(N+1)$-dimensional Dirac equation to the effective one-dimensional scattering problem, achieved via a suitable gauge transformation, provides a physical explanation for the emergence of generalized bivariate Appell polynomials.

Beyond its analytical scope, the present formulation also offers promising perspectives for future numerical developments. By decoupling the Clifford-algebraic structure from the longitudinal dynamics, the hypercomplex representation of the transfer matrix provides a suitable framework for structure-preserving numerical schemes. Future work will focus on discretization strategies for the longitudinal evolution operator, leveraging the local propagator representation to construct efficient approximations of the global transfer matrix for general non-periodic and multi-layered potentials.

\vskip3pt

%%%%%%%%%%%%%%%%%%%%%%%%%%%%%%%%%%%%%%%%%%%%%%%%%%%%%%%%%
\subsection*{Declaration of competing interest}

No potential conflict of interest was reported by the authors.

\vskip3pt

%%%%%%%%%%%%%%%%%%%%%%%%%%%%%%%%%%%%%%%%%%%%%%%%%%%%%%%%%
\subsection*{Acknowledgments}
The first author is a member of the \textit{Gruppo Nazionale per il Calcolo Scientifico} of the Istituto Nazionale di Alta Matematica.
Her research was partially supported by the  INdAM--GNCS Project, code CUP$\_$E53C25002010001.
The work of the third author at CIDMA was supported by FCT (Portuguese Foundation for Science and Technology) under Projects 
\noindent UID/04106/2025 (https://doi.org/10.54499/UID/04106/2025) and 
\noindent UID/PRR/04106/2025 \\(https://doi.org/10.54499/UID/PRR/04106/2025).

%%%%%%%%%%%%%%%%%%%%%%%%%%%%%%%%%%%

% \bibliographystyle{plain}
% \bibliography{bibliography}

% %\bibliographystyle{model1-num-names}
% %\bibliographystyle{cas-model2-names}
% % Loading bibliography database
% \bibliography{cas-references}

\begin{thebibliography}{99}

\bibitem{AcetoGrassiMalonek2026} Aceto, L.,  Grassi, P.A., Malonek, H.R.. 2026. Wigner numbers and combinatorial structures of the transfer matrix in multiple Dirac delta scattering, submitted.


\bibitem{Wigner-2019}
Allen, W.D., 2019. Wigner numbers. J. Chem. Phys. {\bf 151}, 244122. 
	

\bibitem{CommentWigner-2021}
\'{A}rend\'{a}s, P.,  Cs\'{a}sz\'{a}r, A.G., 2021. Comment on ``Wigner numbers".  J. Chem. Phys. {\bf 154}, 087101. 

\bibitem{Bock-2014}
Bock, S., 2014. On Monogenic Series Expansions with Applications to Linear Elasticity Advances in Applied Clifford Algebras 24, 931--943.


\bibitem{BrackxDelangheSommen-1982}
Brackx, F., Delanghe, R., Sommen, F., 1982. Clifford analysis. Pitman, Boston-London-Melbourne.
	 
\bibitem{BSL-1992}
Brackx, F., De Schepper, H., Lávička, R. et al. 2015. Embedding Factors for Branching in Hermitian Clifford Analysis. Complex Anal. Oper. Theory 9, 355--378.

	 
\bibitem{CMF-2019}
Caçao, I., Falcão, M.I., Malonek, H.R., 2019. On generalized Vietoris’ number sequences. Discrete Appl. Math. {\bf 269}, 77--85.

\bibitem{CacaoMalonek2008}
Ca{\c{c}}{\~a}o, I., Falc{\~a}o,  M.L.,  Malonek, H.R., 2008. On the construction of special monogenic polynomials,
Int. J. Math. Math. Sci., \textbf{2008}, Art. ID 240267, 10 pp.

\bibitem{Cacao-2024}
Caçao, I., Falcão, M.I., Malonek, H.R., Tomaz G., 2024. A Sturm-Liouville equation on the crossroads of continuous and discrete hypercomplex analysis. Math. Meth. Appl. Sci. \textbf{47} , 7962--7987.

\bibitem{CMFT-2026}
Cação, I., Falcão, M.I.,  Malonek, H.R., Tomaz, G., 2026. Pascal Trapezoids as Wigner Numbers and Some Combinatorial Properties.  In: Gervasi, O., et al. Computational Science and Its Applications – ICCSA 2025 Workshops. ICCSA 2025. Lecture Notes in Computer Science, vol 15887. Springer, Cham. 

\bibitem{CMFT2018}
Ca\c{c}\~{a}o, I.,  Malonek, H.R., Falc\~{a}o, M.I., Tomaz, G. , 2018. Combinatorial identities associated with a multidimensional polynomial sequence. J. Integer Sequences {\bf 21}, Article 18.7.4.
	
	
\bibitem{Cacao-2023} 
Cação, I., Malonek, H.R., Falcão, M.I., Tomaz, G., 2023. Intrinsic properties of a non-symmetric number triangle. J. Integer Seq., \textbf{26}(4): Art. 23.4.8, 12.
	
\bibitem{Cooper-2001}
Cooper, F.A., Khare, A., Sukhatme. U., 2001. Supersymmetry in Quantum Mechanics.  World Scientific Publishing Company.
  
	
\bibitem{CCFMT-2025}
Cruz, C., Falcão, M.I., Malonek, H.R. et al. 2025. On the Pascal simplex with hypercomplex entries. Complex Anal. Oper. Theory 19, 186.

\bibitem{Eelbode-2012}
Eelbode, D., 2012. Monogenic Appell Sets as Representations of the Heisenberg Algebra. Adv. Appl. Clifford Algebras 22, 1009--1023.
	
\bibitem{Falcao-2006}
Falcão, M.I., Cruz, J., Malonek, H.R., 2006.
	\newblock Remarks on the generation of monogenic functions,
	\newblock In: K. G\"{u}rlebeck and C. K\"{o}nke, (eds.),
	\newblock Proc. of the 17-th Inter. Conf. on the Application of Computer Science and Mathematics in Architecture and
	Civil Engineering, Bauhaus-University Weimar, ISSN 1611--4086.
	
\bibitem{Falcao-2012} 
Falcão, M.I., Malonek., H.R., 2012. A note on a one-parameter family of non-symmetric number triangles. Opuscula Math., \textbf{32}(4): 661--673.

\bibitem{FalcaoMalonek2007}
Falc\~ao, M.I.,    Malonek, H.R., 2007. Generalized exponentials through Appell sets in $\mathbb{R}^{n+1}$
and Bessel functions,
in:
T. E. Simos, G. Psihoyios, C. Tsitouras (eds.),
Numerical Analysis and Applied Mathematics,
AIP Conference Proceedings, Vol.~936,
American Institute of Physics, Melville, NY,   738--741.


\bibitem{FGS-2025} Figueroa, J., Gonzalez, I., Salinas-Arizmendi. D., 2025. A novel transfer matrix framework for multiple Dirac delta potentials. Phys. Lett. A, \textbf{555 }:Paper No. 130785, 7.
	
\bibitem{Gurlebeck-1999}
G{\"u}rlebeck, K., Malonek, H.R., 1999. A hypercomplex derivative of monogenic functions in ${\mathbb{R}}^{n+1}$ and its applications. Complex Variables Theory Appl. \textbf{39}, 199--228.
	
\bibitem{Hitzer-2017}
Hitzer, E., Sangwine, St., 2017. Multivector and multivector matrix inverses in real Clifford algebras,
Applied Mathematics and Computation, Vol. 311, 375--389. 
	

\bibitem{Lounesto-2001}
Lounesto P., 2001. Clifford Algebras and Spinors, London Mathematical Society Lecture Note Series 286.  Cambridge University Press. 


\bibitem{Malonek1990}
Malonek, H.R., 1990. A new hypercomplex structure of the Euclidean space $\mathbb{R}^{m+1}$
and the concept of hypercomplex differentiability,
Complex Variables
\textbf{14}, 25--33.


\bibitem{Malonek-Falcao-2006}
Malonek, H.R., Falcão, M.I., 2006.
3D-mappings using monogenic functions, [in:] Th.E. Simos et al., eds, ICNAAM-2006 Conference Proceedings, Wiley-VCH, Weinheim, 2006, 615–-619.


\bibitem{MalonekFalcao2007}
Malonek, H.R.,  Falc\~ao,  M.I., 2007.
Special monogenic polynomials--properties and applications,
in:
T. E. Simos, G. Psihoyios, C. Tsitouras (eds.),
Numerical Analysis and Applied Mathematics,
AIP Conference Proceedings, Vol.~936,
American Institute of Physics, Melville, NY,  764--767.

\bibitem{Thaller-1992}
Thaller, B., 1992. The Dirac Equation. Texts and Monographs in Physics. Springer-Verlag Berlin Heidelberg.


\end{thebibliography}

%\end{document}

%%%%%%%%%%%%%%%%%%%%%%%%%%%%%%%%%%%

\end{document}